\newtheorem{theorem}{Theorem}
\newtheorem{conjecture}[theorem]{Conjecture}
\newtheorem{proposition}[theorem]{Proposition}
\newtheorem{lemma}[theorem]{Lemma}
\theoremstyle{definition}
\newtheorem{fact}{Fact}
\newcommand{\Line}[1]{\overline{#1}}
\newcommand{\dbelong}{de Bruijn-Erd\H{o}s property}
\author{Laurent Beaudou\affiliationmark{1}\thanks{supported by ANR \textsc{DISTANCIA} (ANR-17-CE40-0015) and \textsc{GraphEn} (ANR-15-CE40-0009).}
  \and Giacomo Kahn\affiliationmark{2}\thanks{supported by  the European Union's {\em Fonds
  Europ\'een de D\'eveloppement R\'egional (\textsc{feder})} program
though project AAP ressourcement S3-DIS4 (2015-2018).}
  \and Matthieu Rosenfeld\affiliationmark{3}\thanks{supported by ANR \textsc{GraphEn} (ANR-15-CE40-0009).}}
\title{Bisplit graphs satisfy the Chen-Chv\'atal conjecture}
\affiliation{
  Higher School of Economics, Moscow, Russian Federation\\
  Universit\'e d'Orl\'eans, France\\
  Universit\'e de Li\`ege, Belgium}
\keywords{bisplit graphs, Chen-Chv\'atal conjecture, distances}
\begin{document}
\publicationdetails{21}{2019}{1}{5}{4813}
\maketitle
\begin{abstract}
  In this paper, we give a lengthy proof of a small result! A graph is
  bisplit if its vertex set can be partitioned into three stable sets
  with two of them inducing a complete bipartite graph. We prove that
  these graphs satisfy the Chen-Chv\'atal conjecture: their metric
  space (in the usual sense) has a universal line (in an unusual
  sense) or at least as many lines as the number of vertices.
\end{abstract}


Given a set of $n$ points in the Euclidean plane, they are all
collinear or they define at least $n$ distinct lines. This result is a
corollary of Sylvester-Gallai Theorem (suggested by \cite{sylvester}
in the late nineteenth century and proven by Gallai forty years later
as reported by \cite{erdos82}). Later,~\cite{debruijnerdos1948} proved
a theorem on collections of subsets, which also implies that $n$
points are either collinear or define at least $n$ distinct lines.

The notion of line admits several generalizations, one of which is of
interest for us in this paper. Namely, given a metric space
$(X,\rho)$, we say that an element $b$ in $X$ is {\em between}
elements $a$ and $c$ if $\rho(a,b) + \rho(b,c) = \rho(a,c)$. More
generally, we say that three elements of $X$ are {\em collinear} if
one of them is between the other two. In that setting, {\em the line
  generated by $a$ and $b$} (denoted $\Line{ab}$) is the set $\{a,b\}$
completed by all elements collinear with $a$ and $b$. Ten years
ago,~\cite[Question~1]{chenchvatal2008} asked what has now become, by
lack of counter-example, the Chen-Chv\'atal Conjecture.

\begin{conjecture}[\cite{chenchvatal2008}]
  \label{conjecturecc}
  Every finite metric space $(X,\rho)$ where no line consists of the
  entire ground set $X$ determines at least $|X|$ distinct lines.
\end{conjecture}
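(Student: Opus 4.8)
The plan is to attack the statement through the combinatorial skeleton of its line system rather than through the metric directly. First I would record the dichotomy that the statement itself presents: either some generated line $\Line{ab}$ already equals the whole ground set $X$, in which case there is nothing to prove, or no line is universal and the goal becomes exhibiting at least $|X|$ pairwise distinct lines. So I would set $\mathcal{L} = \{\Line{ab} : a \neq b \in X\}$ and try to bound $|\mathcal{L}|$ from below by $|X|$.

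The natural tool is the de Bruijn--Erd\H{o}s theorem already invoked in the excerpt: a family of at least two-element subsets of an $n$-set in which every pair of points lies in exactly one member is either a near-pencil or has at least $n$ members. My first hope would be to view each line as such a block and apply this verbatim. The difficulty---and this is exactly where the conjecture becomes hard---is that the metric line system is not a linear space. Two distinct points can lie on several different generated lines, and two distinct lines can meet in more than one point, so the clean incidence hypothesis fails and a direct count is unavailable.

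To recover control I would pass to a local-to-global injection strategy: attach to each point, or to each suitably chosen pair, a distinct line so that the resulting map into $\mathcal{L}$ is injective, yielding $|\mathcal{L}| \geq |X|$. Building such an injection forces one to understand how the \emph{betweenness} relation $\rho(a,b)+\rho(b,c)=\rho(a,c)$ constrains which triples are collinear. Since betweenness in an arbitrary metric space is almost unconstrained, the realistic route is to reduce to metrics that do carry combinatorial structure---foremost the shortest-path metrics of graphs---and then proceed class by class, controlling lines through the geometry of geodesics, convexity, and the interaction between structural vertex partitions.

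The step I expect to be the genuine obstacle is precisely this last one: converting the hypothesis ``no line is universal'' into a concrete supply of $|X|$ distinct lines when the incidence structure is uncontrolled. In full generality this obstacle has not been surmounted---the statement remains open, which is why it is stated here as a conjecture rather than a theorem---so any honest plan must restrict the space. The tractable entry point is to fix a structured graph class whose geodesics are well understood, verify the dichotomy there, and thereby establish the conjecture one family at a time; the bisplit class, whose complete-bipartite core forces many length-two geodesics and hence a rigid betweenness pattern, is the instance this paper settles.
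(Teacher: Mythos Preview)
Your assessment is correct, and there is no proof in the paper to compare against: the statement is presented as a conjecture, not a theorem, and the paper does not attempt to prove it in general. Your diagnosis of the central obstruction---that the line system of an arbitrary metric space is not a linear space, so the classical de Bruijn--Erd\H{o}s incidence count does not apply---is exactly why the problem remains open, and your conclusion that the realistic strategy is to verify the conjecture on structured graph classes one at a time is precisely what the paper does. The paper's actual contribution is Theorem~\ref{thm:main}, the bisplit case you single out at the end; everything preceding it in the paper is context, not an argument for Conjecture~\ref{conjecturecc} itself.
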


A line consisting of the entire ground set is called a {\em universal
  line}. Conjecture~\ref{conjecturecc} remains unsettled when
restricted to graph metrics (for connected graphs). Let us say that a
graph $G$ on $n$ vertices {\em has the \dbelong} if the metric space
induced by $G$ has a universal line or at least $n$ distinct lines. In
a paper gathering more coauthors than pages, Beaudou, Bondy, Chen,
Chiniforooshan, Chudnovsky, Chv\'atal, Fraiman and
Zwols~\cite[Theorem~1]{beaudou_chordal} proved that connected chordal
graphs have the \dbelong. Recently, \cite{aboulker_matamala} improved
this result by encompassing a larger family of graphs.

One may find out quite easily that connected graphs with a bridge have
a universal line. As noted in~\cite[Section 3]{beaudou_chordal}
connected bipartite graphs also have the \dbelong~(each line generated
by both ends of an edge is universal).

A significant number of results have appeared concerning the
asymptotic number of lines in a graph with no universal lines. A
notable one is due to Aboulker, Chen, Huzhang, Kapadia and Supko. They
prove~\cite[Theorem~7.4]{aboulker_chen} that graphs with $n$ vertices
and diameter $d(n)$ have $\Omega((n/d(n))^{4/3})$ distinct lines or a
universal line. This implies that any class of graphs with bounded
diameter ultimately has the \dbelong.

Thus, large graphs of diameter 2 have the \dbelong. Chv\'atal filled the
gap for small graphs of diameter 2 by proving the stronger
result~\cite[Theorem~1]{chvatal2014} that every 1-2 metric space has
the \dbelong.

A connected graph $G$ is {\em bisplit} if its vertex set can be
partitioned into three stable sets $X$, $Y$ and $Z$ such that $Y$ and
$Z$ induce a complete bipartite graph. This class of graphs has
diameter bounded by 4. Thus they ultimately have the
\dbelong. Moreover, bisplit graphs are one step away from bipartite
graphs (when $Z$ or $Y$ is empty). One may think that they could
easily be tamed. It turns out that we could not find a short proof.

In this paper, we prove that bisplit graphs have the \dbelong.

\begin{theorem}
  \label{thm:main}
  For any integer $n$ greater than or equal to 2, all connected
  bisplit graphs on $n$ vertices have a universal line or at
  least $n$ distinct lines.
\end{theorem}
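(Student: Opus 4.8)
The plan is to dispose of the structurally easy cases and then run a case analysis organised around lines through the bipartite core $Y\cup Z$. \emph{Reductions.} Fix a partition $(X,Y,Z)$ witnessing that $G$ is bisplit, and split $X$ into $X_Y$, $X_Z$, $X_{YZ}$ according to whether a vertex of $X$ has neighbours only in $Y$, only in $Z$, or in both (no vertex of $X$ is isolated, since $G$ is connected with $n\ge 2$). If $Y=\emptyset$ or $Z=\emptyset$, or if $X_{YZ}=\emptyset$ (in which case $X_Y\cup Z$ and $X_Z\cup Y$ are stable sets covering all edges), then $G$ is bipartite and every edge generates a universal line. Likewise, if $G$ has a bridge it has a universal line, which already disposes of several thin configurations (e.g.\ $|Y|=|Z|=1$). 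So I would henceforth assume $Y,Z\neq\emptyset$, $X_{YZ}\neq\emptyset$, and $G$ bridgeless.

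\emph{Lines through the core.} For any edge $uv$ one has $\Line{uv}=V\setminus\{w:\rho(u,w)=\rho(v,w)\}$, so $\Line{uv}$ is universal exactly when this bisector is empty. Since $Y$ and $Z$ are completely joined, a short distance computation shows that for $y\in Y$, $z\in Z$ the bisector of $yz$ is precisely $\{x\in X_{YZ}:(y\in N(x))\Leftrightarrow(z\in N(x))\}$, since every vertex outside $X_{YZ}$ is automatically at distinct distances from $y$ and $z$. Hence $\Line{yz}$ is universal iff, for every $x\in X_{YZ}$, exactly one of $y\in N(x)$, $z\in N(x)$ holds, and the first main case split is whether some pair $(y,z)$ satisfies this.

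\emph{The constrained case.} Encoding each $y\in Y$ by its neighbourhood pattern on $X_{YZ}$ and each $z\in Z$ likewise, the absence of a universal core line says that no pattern coming from $Z$ is the complement of one coming from $Y$. This is a severe restriction: when $|X_{YZ}|=1$ it forces a vertex of $X$ adjacent to all of $Y\cup Z$, and in general it confines the patterns of $Y$ and of $Z$ to two families containing no pair of complementary members. Under this hypothesis I would (i) hunt for universal lines among the remaining edges $xy$, $xz$ (with $x\in X$) and among pairs at distance $2,3,4$ — here the bisector computations exploit that bisplit graphs have diameter at most $4$ — and, if $G$ still has no universal line at all, (ii) produce $n$ pairwise distinct lines by building an injection from $V(G)$ to the set of lines, sending each vertex to an edge-line whose bisector, computed as above, records enough of that vertex's neighbourhood to identify it. The distinctness bookkeeping naturally splits according to the location (in $X_Y$, $X_Z$, $X_{YZ}$, $Y$ or $Z$) of the endpoints involved.

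\emph{Main obstacle.} The delicate part is step (ii): edge-lines in a bisplit graph collapse readily — many edges can share a bisector — so exhibiting $n$ genuinely distinct lines requires a careful classification of when $\Line{e}=\Line{e'}$, made lengthy by distances ranging up to $4$ and by the variety of adjacency patterns between $X$ and $Y\cup Z$. This is precisely the point where, as the authors warn, no short argument seems to be available.
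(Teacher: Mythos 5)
Your reductions are sound and your one substantive computation is correct: for an edge $uv$ one indeed has $\Line{uv}=V\setminus\{w:\rho(u,w)=\rho(v,w)\}$, and the bisector of a $Y$--$Z$ edge does live entirely inside $X_{YZ}$ and equals $\{x\in X_{YZ}: (y\in N(x))\Leftrightarrow (z\in N(x))\}$ (this is essentially the paper's Fact~\ref{fact:YZ}). The bipartite and bridge reductions also match the paper's. But from that point on what you have written is a plan, not a proof. Step (ii) --- actually exhibiting $n$ pairwise distinct lines when no universal line exists --- is where the entire content of the theorem lies, and you leave it at ``a careful classification of when $\Line{e}=\Line{e'}$ \dots no short argument seems to be available.'' Nothing in the proposal specifies which $n$ lines to take, nor gives the invariants by which they are told apart; so the theorem is not proved.

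Two concrete reasons why the route you sketch would need real repair rather than mere elaboration. First, you propose to realise the $n$ lines as edge-lines only (an injection from $V(G)$ into lines of edges). The paper's counting leans heavily on lines generated by non-adjacent pairs --- pairs inside $X'$ at distance $2$ or $3$, pairs inside $Y$ at distance $2$, pairs in $X_Y\times Z$ at distance $2$, pairs in $X_Y\times X_Z$ at distance $4$ --- and in the tightest cases even needs the numerical Lemma~\ref{lem:chiant} to close the count; edge-lines collapse badly (all $Y$--$Z$ edge-lines contain $X_Y\cup X_Z\cup Y\cup Z$ and differ only on $X'$, so there may be as few as two of them), and it is far from clear they can ever reach $n$. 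Second, your case split (``some core edge is universal'' versus ``no complementary neighbourhood patterns on $X_{YZ}$'') is not the dichotomy that makes the problem tractable: the paper instead refines the partition into $X'$, $X_Y$, $X_Z$, $Y_1$, $Y_2$, $Z_1$, $Z_2$ (after maximizing $|Y\cup Z|$, which yields the key non-degeneracy condition that no vertex of $X_Y$ is complete to $Y$) and splits on which of $X_Y$, $X_Z$, $Y_2$, $Z$ are empty, distinguishing the chosen families of lines by their traces on these parts. Some equivalent bookkeeping is unavoidable, and it is exactly what is missing here.
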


This settles Problem~1 from \cite{chvatal2018}. Chen and
Chiniforooshan (see final note of~\cite{chvatal2018} in the online
version) provided a proof using computer enumeration for small
cases. Our proof does not use computer enumeration.


\section{Calculus 101}
\label{sec:calculus}

In this section, we state easy results that will be used in the flow
of the proof of Theorem~\ref{thm:main}. We do not give the proof of
the following lemma. It is straightforward.

\begin{lemma}
  For any integer $x$, ${x \choose 2} \geq x - 1$. Besides, for any
  pair of positive integers $x$ and $y$, $xy \geq x+y-1$. Moreover, if
  both $x$ and $y$ are greater than or equal to 2, then $xy \geq x +
  y$.
\end{lemma}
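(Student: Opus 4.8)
The plan is to reduce each of the three inequalities to the nonnegativity of an explicit product of integer factors, since every claim here is a polynomial inequality that factors cleanly over the integers.

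For the first claim, I would expand the binomial coefficient as $\binom{x}{2} = \tfrac{x(x-1)}{2}$ and observe that the inequality $\binom{x}{2} \geq x-1$ is equivalent, after multiplying through by $2$ and collecting terms, to $x(x-1) - 2(x-1) \geq 0$, that is, $(x-1)(x-2) \geq 0$. The key point is that $x-1$ and $x-2$ are consecutive integers, so they are never simultaneously one strictly positive and one strictly negative; their product is therefore always nonnegative, settling the claim for every integer $x$.

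For the second claim, I would use the identity $xy - (x+y-1) = (x-1)(y-1)$. Since $x$ and $y$ are positive integers, both $x-1 \geq 0$ and $y-1 \geq 0$, so their product is nonnegative and $xy \geq x+y-1$ follows at once. For the third claim, the same identity yields $xy - (x+y) = (x-1)(y-1) - 1$, so it suffices to show $(x-1)(y-1) \geq 1$; under the stronger hypothesis $x,y \geq 2$ we have $x-1 \geq 1$ and $y-1 \geq 1$, whence $(x-1)(y-1) \geq 1$ and therefore $xy \geq x+y$.

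There is essentially no obstacle here, as the author already signals by calling the statement straightforward. The only point deserving a moment of care is tracking the hypotheses: the first inequality is asserted for \emph{all} integers (so one must cover small and negative values, which the consecutive-integer argument handles uniformly), whereas the second and third are restricted to positive integers, and the sharper conclusion $xy \geq x+y$ needs the threshold $x,y \geq 2$ precisely because $(x-1)(y-1)$ must then exceed $1$ rather than merely $0$.
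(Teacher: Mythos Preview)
Your proof is correct and clean. The paper itself omits the proof entirely, stating only that it is straightforward, so there is nothing further to compare; your factorizations $(x-1)(x-2)\geq 0$ and $(x-1)(y-1)\geq 0$ (respectively $\geq 1$) are exactly the kind of one-line verifications the authors had in mind.
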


The next lemma is a bit more tedious. While one might use any computer
to have this answer, we give a formal proof hereafter. The reader is
advised to skip the proof if the word {\em trinomial} does not sound
thrilling enough.

\begin{lemma}
  \label{lem:chiant}
  Given two positive integers $x$ and $y$,
  \begin{equation}
    {y \choose 2} + {{\lceil \frac{2x}{y} \rceil} \choose 2} < x + y - 1
    \label{eq:lem}
  \end{equation}
  if and only if $y = 2$ and $x$ is in $\{1,2\}$ or $y = 3$ and $x=3$.
\end{lemma}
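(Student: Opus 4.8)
The plan is to prove the two implications separately. The \emph{if} direction is immediate: substituting each of the pairs $(x,y) = (1,2)$, $(2,2)$, $(3,3)$ into~\eqref{eq:lem} turns it into $1 < 2$, $2 < 3$ and $4 < 5$ respectively, so these three pairs all satisfy the inequality. All the work goes into the converse: I must show that for \emph{every} other pair of positive integers one has ${y \choose 2} + {\lceil 2x/y \rceil \choose 2} \ge x + y - 1$.

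To this end, write $z = \lceil 2x/y \rceil$, a positive integer with $2x \le yz$. Clearing denominators and invoking this bound,
\[
  2\left({y \choose 2} + {z \choose 2} - (x + y - 1)\right)
  = y^2 + z^2 - z - 3y - 2x + 2
  \ \ge\ y^2 + z^2 - yz - z - 3y + 2 \ =:\ f(y,z),
\]
so it suffices to prove $f(y,z) \ge 0$ for all relevant pairs $(y,z)$ outside the exceptional list. Read as a quadratic in $z$, $f(y,z) = z^2 - (y+1)z + (y^2 - 3y + 2)$ has discriminant $-3y^2 + 14y - 7$, which is negative for every integer $y \ge 5$ (there $3y^2 - 14y + 7 \ge 12$); hence $f(y,z) > 0$ for all such $y$ and all real $z$. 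For $y = 4$ one computes $f(4,z) = (z-2)(z-3) \ge 0$ for every integer $z$, and for $y = 1$ the quantity $z = \lceil 2x \rceil = 2x$ is always even, so $f(1,z) = z(z-2) \ge 0$. This settles all $y \notin \{2,3\}$.

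It remains to examine $y \in \{2,3\}$. When $y = 2$ we have $z = x$ and $f(2,z) = z(z-3)$, negative exactly for $z \in \{1,2\}$; when $y = 3$, $f(3,z) = z^2 - 4z + 2$ is negative exactly for $z \in \{1,2,3\}$, i.e.\ (since $z = \lceil 2x/3 \rceil$) exactly for $x \le 4$. Thus for $y = 2$ with $x \ge 3$ and for $y = 3$ with $x \ge 5$, the inequality $f(y,z) \ge 0$ already yields the claim, and we are left with the six explicit pairs $x \le 2$, $y = 2$ and $x \le 4$, $y = 3$. Checking these by hand shows~\eqref{eq:lem} holds precisely for $(x,y) \in \{(1,2),(2,2),(3,3)\}$, matching the \emph{if} direction and finishing the proof.

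I expect the only genuine subtlety to be that the relaxation $2x \le yz$ is too lossy for small $y$: the form $f$ really does attain negative values (its real minimum is $-\tfrac{7}{3}$, near $(y,z) = (\tfrac73,\tfrac53)$), so no single uniform inequality can work. The cases $y \in \{1,2,3,4\}$ must therefore be isolated — which the discriminant computation does cleanly — and then each is dispatched by a bounded number of explicit verifications.
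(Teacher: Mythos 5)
Your proof is correct and follows essentially the same route as the paper: both relax the ceiling via $\lceil 2x/y\rceil \ge 2x/y$, reduce to a quadratic with discriminant $-3y^2+14y-7$, conclude $y\le 4$, and finish by checking finitely many cases. The only (minor) difference is that you treat the quadratic in the integer variable $z=\lceil 2x/y\rceil$ rather than in $x$, which makes the $y=1$ and $y=4$ cases slightly cleaner since $(z-2)(z-3)\ge 0$ and $z(z-2)\ge 0$ hold automatically for the relevant integers.
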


\begin{proof}
  It is easy to check that the solutions provided satisfy the
  inequality. Let us now assume that we are given two integers $x$ and
  $y$ such that $x$ is positive, $y$ is at least 2 and they satisfy
  the inequality.
  
  Note that
  \begin{equation*}
    {{\lceil \frac{2x}{y} \rceil} \choose 2} = \frac{ \lceil \frac{2x}{y} \rceil^2 - \lceil \frac{2x}{y} \rceil}{2} \geq \frac{\frac{4x^2}{y^2} - \frac{2x}{y}}{2}.
  \end{equation*}
  Inequality~\eqref{eq:lem} implies then that
  \begin{equation*}
    \frac{y^2 - y}{2} + \frac{\frac{4x^2}{y^2} - \frac{2x}{y}}{2} - x - y + 1 < 0.
  \end{equation*}
  This inequality can be simplified and expressed as a trinomial on variable $x$.
  \begin{equation}
    \label{eq:trinom}
    4x^2 - 2 (y ^2 + y) x + (y^4 -3y^3+2y^2) < 0.
  \end{equation}
  
  Since the coefficient of $x^2$ is positive, \eqref{eq:trinom} has
  solutions if and only if its discriminant is positive. This
  discriminant can be simplified and \eqref{eq:trinom} has solutions
  if and only if,
  \begin{equation*}
    -3y^2 + 14y - 7 \geq 0.
  \end{equation*}
  This trinomial is even easier than the first one. The value of $y$
  must be between $1$ and $4$. We may now check each case
  individually.
  \begin{itemize}
    \item If $y = 1$, Inequality~\eqref{eq:lem} becomes ${2x \choose 2} <
      x$ which has no integer solution.
    \item If $y = 2$, Inequality~\eqref{eq:lem} reads ${x \choose 2} < x$ and the only solutions
      are for $x$ in $\{1,2\}$.
    \item If $y = 3$, we may check the first values of $x$ and see
      that the only solution is for $x = 3$.
    \item If $y = 4$, the solutions of the trinomial~\eqref{eq:trinom}
      are in the interval $[3,7]$. One may check that no integral
      solution exists.
  \end{itemize}
  This concludes our proof of Lemma~\ref{lem:chiant}.
\end{proof}


\section{Proof of Theorem~\ref{thm:main}}

 Let us consider a bisplit graph $G$. Among all valid partitions for
 $G$, we consider one that maximizes the size of $Y \cup Z$. Then, we
 specify a more precise partition of vertices:
\begin{itemize}
\item the set $X$ is split into three sets: the set $X_Y$
  (respectively $X_Z$) made of vertices of $X$ whose only neighbours
  are in $Y$ (respectively $Z$) and the set $X'$ made of vertices with
  neighbours in both $Y$ and $Z$,
\item the set $Y$ (respectively $Z$) is split into two sets: the set
  $Y_1$ (respectively $Z_1$) made of vertices with at least one
  neighbour in $X_Y$ (respectively $X_Z$) and the set $Y_2$
  (respectively $Z_2$) made of vertices with no neighbour in $X_Y$
  (respectively $X_Z$).
\end{itemize}

Armed with this new partition, we may derive a sketch of all possible
distances between vertices of a bisplit graph. Numbers on
Figure~\ref{fig:general} refer to the possible distance between two
vertices in a same set (numbers in the circles), or between two
vertices in two separate sets (numbers on edges between two sets). For
example, if $a$ and $b$ are two vertices in $X'$, they both have
neighbours in $Y$ and $Z$. If they have a common neighbour they are at
distance 2 and if not, they must be at distance 3. 


\begin{figure}[ht]
  \scriptsize
    \begin{center}
    \begin{tikzpicture}[scale=1.5]
      \draw (0,2) circle(.5);
      \draw (-3,0) circle(.5);
      \draw (3,0) circle(.5);
      \draw[dashed] (-1,0) circle(.35);
      \draw[dashed] (1,0) circle(.35);
      \draw[dashed] (-1,-1) circle(.35);
      \draw[dashed] (1,-1) circle(.35);
      \draw[rounded corners=3pt] (-1.6,-1.6) rectangle (-.4,.6);
      \draw[rounded corners=3pt] (.4,-1.6) rectangle (1.6,.6);
      
      \draw (-.4,-.5) -- (.4,-.5);
      \draw (-2.5,0) -- (-1.35,0);
      \draw (2.5,0) -- (1.35,0);
      \draw (-2.6,-.3) -- (-1.35,-1);
      \draw (2.6,-.3) -- (1.35,-1);
      \draw (-.29,1.6) -- (-1,.6) node[sloped, midway, above] {$1,2$};
      \draw (.29,1.6) -- (1,.6) node[sloped, midway, above] {$1,2$};
      \draw (-.45,1.8) -- (-2.7,.4) node[sloped, midway, above] {$2,3$};
      \draw (.45,1.8) -- (2.7,.4) node[sloped, midway, above] {$2,3$};
      \draw plot [smooth] coordinates { (-2.7,-0.4) (-2,-1.8)  (.42,-1.57)};
      \draw plot [smooth] coordinates { (2.7,-0.4) (2,-1.8)  (-.42,-1.57)};
      \draw plot [smooth] coordinates { (-3,-0.5) (-2.4,-2) (0,-2.3) (2.4,-2) (3,-.5)};

      \draw (0,-2.3) node[below] {$3$};
      \draw (-2,-0.5) node {$3$};
      \draw (-2,0) node[above] {$1,3$};
      \draw (2,-0.5) node {$3$};
      \draw (2,0) node[above] {$1,3$};
      \draw (0,-0.5) node[above] {$1$};
      \draw (-2,-1.65) node {$2$};
      \draw (2,-1.65) node {$2$};
      \draw (-1.3,-1.5) node[left] {$Y$}; 
      \draw (1.3,-1.5) node[right] {$Z$}; 
      \draw (0,2) node [below] {$2,3$} node [above] {$X'$}; 
      \draw (-1,0) node  {$Y_1$}; 
      \draw (1,0) node  {$Z_1$};
      \draw (-1,-1) node  {$Y_2$}; 
      \draw (1,-1) node  {$Z_2$};
      \draw (-3,0) node [above] {$X_Y$} node [below] {$2,4$}; 
      \draw (3,0) node [above] {$X_Z$} node [below] {$2,4$};
      \draw (-1,-.5) node {$2$};
      \draw (1,-.5) node {$2$};
    \end{tikzpicture}
    \end{center}
    \caption{Possible distances in a bisplit graph}
    \label{fig:general}
  \end{figure}
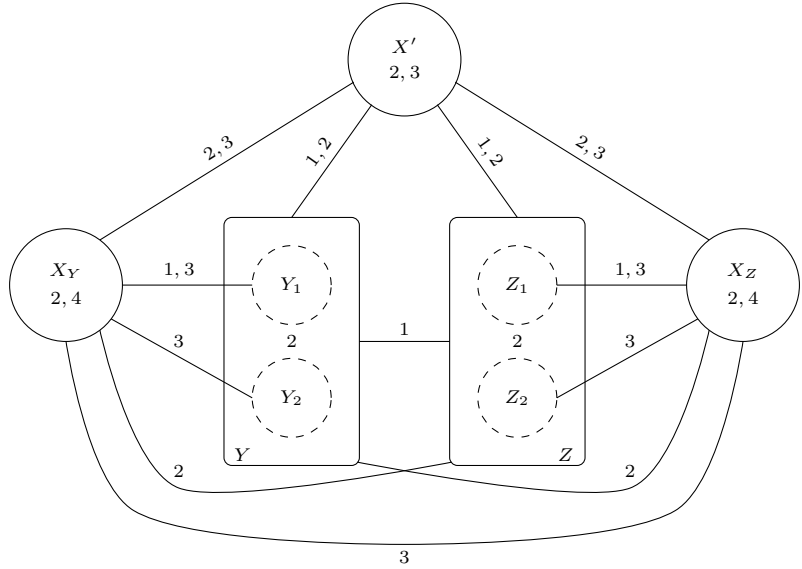


A vertex in $X_Y$ adjacent to all the set $Y$ could be put in
the set $Z$ from the start. Since we chose to maximize the size of $Y
\cup Z$ among all valid partitions, we may assume that,
\begin{equation}
  \text{no vertex in }X_Y  \text{ is complete to } Y.\label{eq:nofull}
\end{equation}

Moreover, if either $X'$, $Y$ or $Z$ is empty, the whole graph is
bipartite and thus satisfies the \dbelong. From now on, we consider
that
\begin{equation}
  X', Y \text{ and } Z \text{ are not empty.} \label{eq:X'}
\end{equation}

Before diving into the proof, we warn the reader that we will
extensively use a simple fact. For three points to be collinear in a
metric space with distances ranging from 0 to 4, the triple of
distances they define can only be one of the following: $(1,1,2),
(1,2,3), (1,3,4)$ or $(2,2,4)$. They are the only cases when the
triangle inequality is tight.

The proof is declined as a case study depending on the emptiness of
sets $X_Y$ and $X_Z$.

\subsection{When both $X_Y$ and $X_Z$ are empty}

Restricting to the case when $X_Y$ and $X_Z$ are empty amounts to
considering bisplit graphs for which all vertices in $X$ have at least
one neighbour in $Y$ and one in $Z$.

\begin{proposition}
  If a bisplit graph is such that all vertices in $X$ have at least
  one neighbour in both $Y$ and $Z$, then it has the \dbelong.
  \label{prop:non-deg}
\end{proposition}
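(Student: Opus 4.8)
The plan is to work with the bisplit partition $X = X' $ (since $X_Y = X_Z = \emptyset$), $Y$, $Z$, under the structural assumptions \eqref{eq:nofull} (vacuous here) and \eqref{eq:X'}. The key is to exhibit an injection from the vertex set into the set of lines, or else produce a universal line. First I would observe, reading off Figure~\ref{fig:general}, that in this degenerate case the metric is very tightly controlled: any two vertices of $Y$ are at distance $2$ (via any common neighbour in $Z$, which exists since $Z \neq \emptyset$), similarly any two vertices of $Z$ are at distance $2$, every $Y$--$Z$ pair is at distance $1$, every $X'$--$Y$ and $X'$--$Z$ pair is at distance $1$ or $2$, and two $X'$ vertices are at distance $2$ or $3$. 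So the diameter is at most $3$ and all the relevant collinear triples are of type $(1,1,2)$ or $(1,2,3)$.

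Next I would try the natural candidate lines. For a vertex $y \in Y$ and $z \in Z$, the line $\Line{yz}$ contains everything at distance configuration making a tight triangle with $y,z$; I expect that a good first move is to check whether some $\Line{yz}$ is universal — it contains all of $Y \cup Z$ automatically (each $y' \in Y$ has $\rho(y',y)=2$, $\rho(y',z)=1$, giving $(1,2,\text{but } \rho(y,z)=1)$... so actually $y'$ is \emph{between} $z$ and $y$ only if $2+1 = \rho(y,z)$, false; one must instead check $y$ between $z,y'$: $1 + 2 = 3 \ne 2$; so these are \emph{not} automatically collinear), meaning $\Line{yz}$ is not obviously universal and the whole content of the proposition is that when no line is universal we can still find $n$ lines. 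So the real work is a counting argument: partition the candidate line set according to which pair generates it, and show that the lines $\{\Line{yz} : y\in Y, z\in Z\}$ together with lines through $X'$ already number at least $n = |X'| + |Y| + |Z|$, unless some coincidence forces a universal line. Here is where Lemma~\ref{lem:chiant} and the Calculus~101 lemma come in: one bounds the number of distinct lines from below by quantities like $\binom{|Y|}{2}$, $\binom{|Z|}{2}$, $|Y|\cdot|Z|$, and compares with $|X'|+|Y|+|Z|-1$; the inequality $xy \ge x+y$ for $x,y \ge 2$ handles the generic case, and the small exceptional cases ($|Y|$ or $|Z|$ equal to $2$ or $3$ with $|X'|$ small) are exactly those flagged by Lemma~\ref{lem:chiant}, to be dispatched by hand.

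The main obstacle I anticipate is twofold. First, controlling \emph{distinctness} of the lines: two different generating pairs may yield the same line, so one needs a combinatorial handle — probably identifying each line with the trace it cuts on $Y$, on $Z$, and on $X'$ (via the adjacency pattern to $Y$ and $Z$), and arguing that sufficiently many distinct traces arise, or else that a collapse of traces forces some vertex to be adjacent to all of $Y$ (contradicting a maximality-type choice) or forces a universal line directly. Second, the bookkeeping when one of $|X'|, |Y|, |Z|$ is very small (e.g. $|X'| = 1$, or $|Y| = |Z| = 1$ — though $|Y|=|Z|=1$ with $X' \ne \emptyset$ makes the graph a path or small tree with a bridge, hence a universal line): these boundary configurations will each need a short separate verification, and getting a clean uniform statement that absorbs them is the delicate part. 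My expectation is that the bulk of the proof is this case analysis on $(|X'|, |Y|, |Z|)$, with the ``large'' regime being a one-line application of $xy \ge x+y$ and the ``small'' regime being a handful of explicit checks matching the exception list of Lemma~\ref{lem:chiant}.
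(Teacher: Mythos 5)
There is a genuine gap: what you have written is a plan whose two hard steps are exactly the ones you defer, and one of the concrete counting claims in the plan is false. The family $\{\Line{yz} : y\in Y,\ z\in Z\}$ does not contribute anything like $|Y|\cdot|Z|$ distinct lines. Since in this case every vertex of $X$ has a neighbour in both $Y$ and $Z$, all distances from a vertex of $Y\cup Z$ are $1$ or $2$, so $\Line{yz}=Y\cup Z\cup\bigl(N_X(y)\,\triangle\,N_X(z)\bigr)$; if every vertex of $X$ is complete to $Y\cup Z$ (or more generally whenever the symmetric differences coincide), this whole family collapses to the \emph{single} line $Y\cup Z$, which need not be universal. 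So the inequality $xy\ge x+y$ cannot be applied to that family, and the proposed budget $\binom{|Y|}{2}+\binom{|Z|}{2}+|Y||Z|+\dots$ does not get off the ground. Likewise, Lemma~\ref{lem:chiant} plays no role in this configuration (the paper only needs it much later, in the $Y_2=\emptyset$ case); invoking it here does not supply the missing small-case analysis.

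The idea you are missing is the anchor that makes distinctness provable. The paper first disposes of the case where all distances are at most $2$ by Chv\'atal's theorem on $1$--$2$ metric spaces, and otherwise fixes a vertex $a\in X$ at distance $3$ from some $b\in X$. This single choice yields for free that $|Y|,|Z|\ge 2$ and that $a$ is complete to neither $Y$ nor $Z$, and these facts are precisely what separate the families: the $|X|-1$ lines $\Line{ax}$ (which meet $X$ only in $\{a,x\}$ because no tight triple lives in $\{2,3\}$), the lines $\Line{ay}$ for $y\in\overline{N_Y(a)}$ and $\Line{yy'}$ for $y,y'\in N_Y(a)$ (together at least $|Y|-1$ lines, distinguished by their trace on $Y$), and symmetrically for $Z$ --- all containing $a$ --- followed by a separate hunt for three lines avoiding $a$, which is where the residual case analysis (including two explicit seven-vertex graphs) lives. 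Your ``identify each line with its trace on $Y$, $Z$, $X'$'' instinct is the right one, but without the distance-$3$ anchor the traces do not separate, and without the last-three-lines step the count stops at $n-3$.
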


\begin{proof}
  Let $G$ be such a bisplit graph. By \eqref{eq:X'}, sets $X$, $Y$ and
  $Z$ are not empty. If the distance between any two vertices is at
  most 2, then $G$ has the de Bruijn-Erd\H{o}s property (Chv\'{a}tal
  proved that every 1-2 metric space has the \dbelong{}~\cite[Theorem
    1]{chvatal2014}). So there are at least two vertices $a$ and $b$
  at distance 3 from each other. Both $a$ and $b$ must be vertices in
  $X$ since all distances involving some vertex in $Y$ or $Z$ is 1 or
  2 (see Figure~\ref{fig:nondeg}). Moreover, $Y$ and $Z$ both have
  cardinality at least 2 (if $Y$ was a singleton, $a$ and $b$ would
  have a common neighbour since they must have a neighbour in
  $Y$). Finally, vertex $a$ cannot be complete neither to $Y$ nor to
  $Z$ (or it would be at distance 2 from $b$).


  \begin{figure}[ht]
    \begin{center}
    \begin{tikzpicture}
      \draw (0,0) circle(1);
      \draw (-2,-3) circle(1);
      \draw (2,-3) circle(1);
      \draw (-1,-3) -- (1,-3);
      \draw (236.3:1) -- (236.3:2.605);
      \draw (-56.3:1) -- (-56.3:2.605);

      \draw (-1,-1.5) node [left] {$1,2$};
      \draw (1,-1.5) node [right] {$1,2$};
      \draw (0,-3) node [above] {$1$};
      \draw (0,0) node [below] {$2,3$} node [above] {$X$};
      \draw (-2,-3) node [below] {$2$} node [above] {$Y$};
      \draw (2,-3) node [below] {$2$} node [above] {$Z$};
    \end{tikzpicture}
    \end{center}
    \caption{Possible distances in a bisplit graph where $X_Y$ and $X_Z$ are empty}
    \label{fig:nondeg}
  \end{figure}
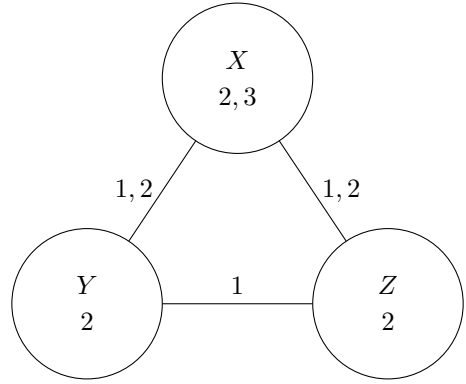

  Concerning notation, we shall use $N(v)$ to represent the
  neighbourhood of a vertex $a$. If we want to restrict ourselves to
  the neighbours of a vertex $a$ in a set $S$, we shall write
  $N_S(a)$. Moreover, for an integer $i$, $N^i(a)$ denotes the
  vertices which are at distance exactly $i$ from $a$. Similarly, we
  may restrict to a specific set by using a subscript. Finally,
  $\overline{N_S(a)}$ denotes the non-neighbours of vertex $a$ within
  set $S$.
  
  Let $F_X$ denote the set of all lines generated by $a$ and another
  vertex $x$ in $X \setminus \{a\}$. For every such vertex $x$, the
  intersection of $\Line{ax}$ with $X$ is always restricted to the
  generators $a$ and $x$ (all tight triple of distances must have a
  $1$ or a $4$). Then $F_X$ is a set of $|X|-1$ distinct lines.
  \begin{equation}
    \mbox{For all } x \mbox{ in } X \setminus \{a\}\mbox{, } \Line{ax}
    = \{a,x\} \cup \begin{cases} N(a) \cap N(x) & \mbox{ if } ax =
      2\\ N(a) \cup N(x) & \mbox{ if } ax = 3.\end{cases}
    \tag{$L_X$}\label{eq:LX}
  \end{equation}
  
  Let $F_Y$ denote the set of all lines generated by $a$ and a vertex
  in $Y$ which is not adjacent to $a$.
  \begin{equation}
    \mbox{For all } y \mbox{ in } \overline{N_Y(a)} \mbox{, } \Line{ay} = \{a\}
    \cup (N_X^3(a) \cap N_X(y)) \cup \{y\} \cup N_Z(a).
    \tag{$L_Y$}\label{eq:LY}
  \end{equation}

  Since the intersection of such a line with $Y$ is restricted to the
  singleton containing the other generator, all those lines must be
  distinct.

  Let $F_Y'$ be the set of all lines generated by a pair of vertices
  in $Y$ which are adjacent to $a$.
  \begin{equation}
    \mbox{For all } y \text{ and } y' \mbox{ in } N_Y(a) \mbox{, }
    \Line{yy'} = N_X(y) \cap N_X(y') \cup \{y,y'\} \cup Z.
    \tag{$L'_Y$}\label{eq:LY'}
  \end{equation}

  Since the intersection of such a line with $Y$ is exactly the pair
  of generators, all those lines are also distinct. Moreover, they are
  distinct from lines in $F_Y$ since the latter intersect $Y$ on a
  singleton. As a consequence, $F_Y \cup F_Y'$ is a set of 
  \begin{equation*}
    {d_Y(a) \choose 2} + |Y| - d_Y(a) 
  \end{equation*}
  distinct lines. Note that this quantity is always greater than or
  equal to $|Y| - 1$.

  We define $F_Z$ and $F'_Z$ similarly.
  
  \paragraph{No intersection.}
  We first prove that $F_X$ does not intersect the other families of
  lines. For a contradiction, suppose that $F_X$ intersects
  $F_Y$. Then there are two vertices $x$ in $X \setminus \{a\}$ and
  $y$ in $Y \cap \overline{N(a)}$ such that $\Line{ax} = \Line{ay}$.
  Let us focus on the intersection with $Y$. It must be exactly
  $\{y\}$. If $d(a,x) = 2$ then \mbox{$\{y\} = N(a) \cap N(x)$} which is
  impossible since $ay$ is not an edge in $G$. If $d(a,x)=3$ then
  $\{y\} = N(a) \cup N(x)$ but $a$ and $x$ have no common neighbour
  and at least one neighbour each in $Y$. This is also a
  contradiction. Now suppose that $F_X$ intersects $F'_Y$. Then there
  are three vertices $x$ in $X \setminus \{a\}$ and $y$ and $y'$ in
  $N_Y(a)$ such that $\Line{ax} = \Line{yy'}$. This line must contain
  the whole set $Z$. Since $a$ is not complete to $Z$, $N(a) \cap
  N(x)$ cannot contain $Z$. From this and statement~\eqref{eq:LX}, we
  derive that $a$ and $x$ must be at distance 3. But then $\{y,y'\}$
  must be the union of $N_Y(a)$ and $N_Y(x)$. Since both $y$ and $y'$
  are neighbours of $a$ and as $x$ must have a neighbour in $Y$ (by
  our initial hypothesis), $a$ and $x$ must have a common neighbour
  which is a contradiction.

  Let us now prove that $F_Y$ does not intersect $F_Z$. For a
  contradiction, assume that there are two vertices $y$ in $Y \cap
  \overline{N(a)}$ and $z$ in $Z \cap \overline{N(a)}$ such that
  $\Line{ay} = \Line{az}$. By looking at the intersection with $Z$,
  this would mean that $z$ is in $N(a)$ which is a contradiction. We
  keep going and prove that $F_Y$ does not intersect $F'_Z$. Lines in
  $F'_Z$ contain the whole set $Y$ but lines in $F_Y$ contain only
  non-neighbour of $a$. Since $a$ has at least a neighbour in $Y$
  these lines cannot be equal.

  Finally, let us prove that $F'_Y$ does not intersect $F_Z'$. Once
  again, if two such lines were equal, they would contain the whole
  sets $Y$ and $Z$ which should be of cardinality 2 but then $a$ would
  be complete to both of them which is impossible.

  In the end, we may sum all those lines together. We obtain at least
  \begin{equation*}
    |X| - 1 + |Y| - 1 + |Z| - 1
  \end{equation*}
  distinct lines. They all contains vertex $a$.

  \paragraph{Reaching for the last three lines.}
  If $X$ has cardinality 4 or more, we may consider all the lines
  generated by a pair of vertices in $X \setminus \{a\}$. Those line
  do not contain $a$ and are distinct from each other. There are at
  least three such lines.

  If $X$ has cardinality at most 3, any line generated by a pair of
  vertices in $X$ contains $a$ or $b$ (recall that $b$ is a vertex at
  distance 3 from $a$). We shall distinguish two extra lines. Vertex
  $b$ must have a neighbour $y_b$ in $Y$ and a neighbour $z_b$ in $Z$
  by our hypothesis. Similarly, vertex $a$ has a neighbour $y_a$ in
  $Y$ and $z_a$ in $Z$. Since $a$ and $b$ are at distance 3, those
  four vertices must be distinct.

  Now if $X$ has cardinality 2, the line $\Line{y_az_b}$ is
  universal. And if $X$ has cardinality 3, the third vertex and $b$
  generate one line that does not go through $a$. Moreover we may
  consider lines $\Line{y_ay_b}$ and $\Line{z_az_b}$. These lines do
  not contain neither $a$ nor $b$. Thus they are different from all
  the lines described above. The only issue comes if they are
  equal. In that case, $Y$ and $Z$ must have cardinality 2 and we know
  almost everything about the graph. Let $c$ be the third vertex in
  $X$. It must have at least one neighbour in $Y$. Without loss of
  generality, we may assume it is $y_a$. Then for $\Line{y_az_b}$ not
  to be universal, $c$ must be a neighbour of $z_b$. For the same
  reason, $c$ is either adjacent to both $y_b$ and $z_a$ or to none of
  them. This leads to two graphs (see Figure~\ref{fig:twographs}).


  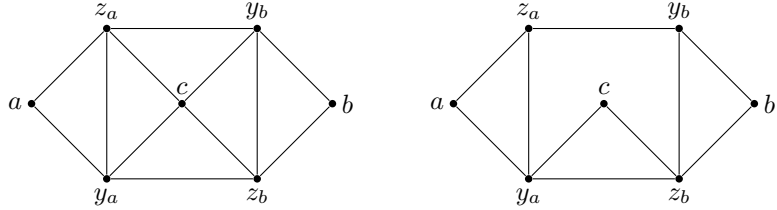
\begin{figure}[ht]
    \begin{center}
    \tikzstyle{vertex}=[circle,fill,black,inner sep=1pt]
    \begin{tikzpicture}
      \node[vertex] (c) at (0,0) {};
      \node[above] at (c) {$c$};
      \node[vertex] (za) at (-1,1) {};
      \node[above] at (za) {$z_a$}; 
      \node[vertex] (zb) at (1,-1) {};
      \node[below] at (zb) {$z_b$};
      \node[vertex] (a) at (-2,0) {};
      \node[left] at (a) {$a$};
      \node[vertex] (b) at (2,0) {}; 
      \node[right] at (b) {$b$};
      \node[vertex] (ya) at (-1,-1) {};
      \node[below] at (ya) {$y_a$};
      \node[vertex] (yb) at (1,1) {};
      \node[above] at (yb) {$y_b$};
      \draw (a) -- (za) -- (yb) -- (b) -- (zb) -- (ya) -- (a);
      \draw (c) -- (za) -- (ya) -- (c) -- (zb) -- (yb) -- (c);
    \end{tikzpicture}
    \qquad 
    \begin{tikzpicture}
      \node[vertex] (c) at (0,0) {};
      \node[above] at (c) {$c$};
      \node[vertex] (za) at (-1,1) {};
      \node[above] at (za) {$z_a$}; 
      \node[vertex] (zb) at (1,-1) {};
      \node[below] at (zb) {$z_b$};
      \node[vertex] (a) at (-2,0) {};
      \node[left] at (a) {$a$};
      \node[vertex] (b) at (2,0) {}; 
      \node[right] at (b) {$b$};
      \node[vertex] (ya) at (-1,-1) {};
      \node[below] at (ya) {$y_a$};
      \node[vertex] (yb) at (1,1) {};
      \node[above] at (yb) {$y_b$};
      \draw (a) -- (za) -- (yb) -- (b) -- (zb) -- (ya) -- (a);
      \draw (za) -- (ya) -- (c) -- (zb) -- (yb);
    \end{tikzpicture}
    \caption{The last two suspects.}
    \label{fig:twographs}
    \end{center}
  \end{figure}


  It is then straightforward to check that these two graphs have more
  than seven lines. This concludes our proof of Proposition~\ref{prop:non-deg}.
\end{proof}

\subsection{When $X_Y$ or $X_Z$ is non-empty}
In Proposition~\ref{prop:non-deg}, we proved that when $X_Y$ and $X_Z$
are empty, the graph satisfies the \dbelong. For the other cases, we
may assume without loss of generality that
\begin{equation}
  X_Y \text{ is not empty.} \label{eq:XY}
\end{equation}

As a direct consequence of~\eqref{eq:XY}, we may assume that 
\begin{equation}
  Y_1 \text{ has cardinality at least 2}
  \label{eq:Y1}
\end{equation}
because otherwise, the graph is not connected (if $Y_1$ is empty) or
has a bridge (if $Y_1$ is a singleton) which generates a universal
line. Moreover, if $Y$ has cardinality exactly 2, since every vertex
in $X_Y$ has degree at least 2, it is complete to $Y$ which
contradicts~\eqref{eq:nofull}. Therefore, we may consider that
\begin{equation}
  Y \text{ has cardinality at least } 3.
\label{eq:Y3}
\end{equation}

The remainder of our proof relies on a careful choice of families of
distinct lines. For any two sets $A$ and $B$ of vertices, we define
$F_{AB}$ to be the set of lines generated by any two vertices $a$ in
$A$ and $b$ in $B$.

\begin{fact}
The set $F_{X'X'}$ is made of ${|X'| \choose 2}$ distinct lines each
of which intersect $X'$ on exactly two vertices. Moreover, those lines
do not intersect neither $X_Y$ nor $X_Z$.
\label{fact:X'}
\end{fact}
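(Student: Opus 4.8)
The plan is to lean entirely on the simple fact recalled above: in a metric space with distances in $\{0,1,\dots,4\}$ the only tight triples of pairwise distances are $(1,1,2)$, $(1,2,3)$, $(1,3,4)$ and $(2,2,4)$. None of these has all three entries in $\{2,3\}$, so it suffices to show that every pairwise distance among the vertices of $X'$, and every distance from a vertex of $X'$ to a vertex of $X_Y \cup X_Z$, lies in $\{2,3\}$.

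First I would pin down those distances. Every vertex in question lies in $X$, which is a stable set, so no two of them are adjacent and each such distance is at least $2$. For the upper bound I use the complete bipartite structure of $Y \cup Z$: given $a,b \in X'$, pick a neighbour $z \in Z$ of $a$ and a neighbour $y \in Y$ of $b$; since $y$ and $z$ are adjacent, $a,z,y,b$ is a walk of length $3$, so $d(a,b) \le 3$. Given $v \in X_Y$ and $a \in X'$, pick a neighbour $y \in Y$ of $v$ (which exists because the graph is connected and all neighbours of $v$ lie in $Y$) and a neighbour $z \in Z$ of $a$; again $y$ and $z$ are adjacent, so $d(v,a) \le 3$, and the case $v \in X_Z$ is symmetric.

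With the distances in hand, fix distinct $a,b \in X'$ and let $c$ be any vertex of $(X' \cup X_Y \cup X_Z) \setminus \{a,b\}$. Then $d(c,a)$, $d(c,b)$ and $d(a,b)$ all lie in $\{2,3\}$, so $(d(c,a), d(c,b), d(a,b))$ is not a tight triple; hence $c$ is not collinear with $a$ and $b$, and $c \notin \Line{ab}$. This simultaneously yields $\Line{ab} \cap X' = \{a,b\}$ and $\Line{ab} \cap (X_Y \cup X_Z) = \emptyset$. Distinctness of the lines in $F_{X'X'}$ is then immediate: each line $\Line{ab}$ determines its generating pair as $\Line{ab} \cap X'$, so distinct pairs give distinct lines and $|F_{X'X'}| = {|X'| \choose 2}$.

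I do not expect a genuine obstacle here; the one point that needs care is the upper bound $d \le 3$ — in particular that no pair of vertices from $X' \cup X_Y \cup X_Z$ entering the argument is ever at distance $4$, which is precisely what would reopen the door to the $(2,2,4)$ triple (note that two vertices of $X_Y$ may well be at distance $4$, but such a pair never occurs in the triples we examine).
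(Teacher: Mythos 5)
Your proposal is correct and is essentially the paper's argument: the paper's proof of this fact is a one-line appeal to "checking the possible distances" in Figure~\ref{fig:general}, and you have simply written that check out in full (all relevant distances lie in $\{2,3\}$, no tight triple has all entries in $\{2,3\}$, and the intersection with $X'$ recovers the generating pair). Your closing caveat about avoiding the $(2,2,4)$ triple is exactly the right point of care, and your triples never involve two vertices of $X_Y$, so the argument goes through.
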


\begin{proof} Fact~\ref{fact:X'} is obtained by checking the possible distances in
the graph (see Figure~\ref{fig:general}).
\end{proof}

\begin{fact}
The set $F_{Y_1Y}$ is made of ${|Y_1| \choose 2} + |Y_1||Y_2|$
distinct lines each of which intersects $Y$ on exactly two vertices
(the generators). Moreover, those lines do intersect $X_Y$ while they
do not intersect $X_Z$.
\label{fact:Y}
\end{fact}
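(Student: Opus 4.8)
The plan is to fix an arbitrary pair $a\in Y_1$, $b\in Y$ with $a\neq b$ and to understand the line $\Line{ab}$ well enough to establish three things: (i) $\Line{ab}\cap Y=\{a,b\}$, (ii) $\Line{ab}$ contains a vertex of $X_Y$, and (iii) $\Line{ab}$ contains no vertex of $X_Z$. The cardinality count and the distinctness both fall out of (i). The only ingredients needed are the list of tight triples recalled before the case analysis (namely $(1,1,2),(1,2,3),(1,3,4),(2,2,4)$) together with the distance facts forced by the complete bipartite structure of $Y\cup Z$: since $Z\neq\emptyset$ by~\eqref{eq:X'}, any two distinct vertices of $Y$ are at distance exactly $2$ (non-adjacent, but with a common neighbour in $Z$); any vertex of $X_Z$ is at distance exactly $2$ from every vertex of $Y$ (adjacent to none of them, but sharing a neighbour in $Z$); and any vertex of $X_Y$ lies within distance $3$ of every vertex of $Y$, through one of its neighbours in $Y$ and then a vertex of $Z$.

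For (i) I would take pairwise distinct $a,b,y''\in Y$; all three pairwise distances equal $2$, and $(2,2,2)$ is not a tight triple, so $y''$ is not collinear with $a$ and $b$. Hence $\Line{ab}$ meets $Y$ in exactly its two generators, the map $\{a,b\}\mapsto\Line{ab}$ is injective on the pairs with an endpoint in $Y_1$, and therefore $F_{Y_1Y}$ has exactly ${|Y_1|\choose 2}+|Y_1||Y_2|$ elements.

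For (ii), since $a\in Y_1$ fix a neighbour $x$ of $a$ in $X_Y$; all of $N(x)$ lies in $Y$. If $x$ is adjacent to $b$, then $x\in N(a)\cap N(b)\subseteq\Line{ab}$. Otherwise $d(x,b)\geq 2$, but a common neighbour of $x$ and $b$ would have to lie both in $Y$ (being a neighbour of $x$) and outside $Y$ (being a neighbour of $b\in Y$), which is impossible; so $d(x,b)=3$, and then $d(x,a)+d(a,b)=1+2=3=d(x,b)$ shows $a$ lies between $x$ and $b$, so again $x\in\Line{ab}$. For (iii), suppose some $w\in X_Z$ lay on $\Line{ab}$; since $N(w)\subseteq Z$, the vertex $w$ is adjacent to neither $a$ nor $b$, so it cannot lie between them, and hence $a$ or $b$ must lie between $w$ and the other generator; but $d(w,a)=d(w,b)=2$ makes both of those betweenness equalities impossible (each would force a distance of $4$), a contradiction.

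The one genuinely delicate point is the middle case of (ii), ruling out $d(x,b)=2$: the clean argument is the neighbourhood observation $N(x)\subseteq Y$ versus $N(b)\cap Y=\emptyset$, which collapses $d(x,b)$ to $1$ or $3$, and both surviving values put a vertex of $X_Y$ on the line. Everything else is a mechanical reading of the tight-triple list against Figure~\ref{fig:general}.
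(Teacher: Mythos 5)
Your proof is correct and follows the same route as the paper, which simply declares the fact a ``straightforward analysis'' of the distance diagram in Figure~\ref{fig:general}; you have filled in exactly the distance computations (all of $Y$ pairwise at distance 2 via $Z$, $X_Z$ at distance 2 from $Y$, $X_Y$ at distance 1 or 3 from $Y$) and the tight-triple checks that the paper leaves implicit. No gaps.
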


\begin{proof}
Fact~\ref{fact:Y} is also obtained through a straightforward analysis
of Figure~\ref{fig:general}. Moreover, by looking at the intersection
with $X_Y$, we may derive from Facts~\ref{fact:X'} and~\ref{fact:Y}
that $F_{X'X'}$ and $F_{Y_1Y}$ are disjoint.
\end{proof}

\begin{fact}
  The set $F_{X_YZ}$ contains at least $|Z|$ lines which intersect $Z$
  on a singleton (the generator) and which contain the whole set
  $Y$. Moreover the intersection of such lines with $X_Y$ is not
  empty.
  \label{fact:XZ}
\end{fact}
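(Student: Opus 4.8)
The plan is to fix a single vertex $a \in X_Y$ -- which exists by~\eqref{eq:XY} -- and to exhibit the $|Z|$ lines $\Line{az}$, as $z$ ranges over $Z$, as pairwise distinct members of $F_{X_YZ}$, each satisfying the three advertised properties. The whole argument is a distance computation inside the bisplit structure, so I would start by pinning down $d(a,z)$ for $z \in Z$: since $a$ lies in $X_Y$ it has no neighbour in $Z$, so $d(a,z)\geq 2$; and since $G$ is connected $a$ has a neighbour $y\in N_Y(a)$, which is adjacent to $z$ because $Y$ and $Z$ induce a complete bipartite graph, giving the path $a\,y\,z$. Hence $d(a,z)=2$ for every $z\in Z$.

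Next I would compute $\Line{az}\cap Y$. For any $y\in Y$ we have $d(y,z)=1$ by complete bipartiteness, and $d(a,y)\in\{1,3\}$: it equals $1$ exactly when $y\in N_Y(a)$, and otherwise $a$ and $y$ have no common neighbour (the neighbours of $a$ lie in $Y$, those of $y$ in $X\cup Z$), so $d(a,y)=3$, witnessed by a path $a\,y'\,z'\,y$ with $y'\in N_Y(a)$ and $z'\in Z$ (both sets nonempty by~\eqref{eq:X'} and connectivity). Thus the triple $(d(a,y),d(y,z),d(a,z))$ is either $(1,1,2)$ or $(1,2,3)$, both of which are among the four tight triples recalled before the case analysis; so every $y\in Y$ is collinear with $a$ and $z$, that is, $Y\subseteq\Line{az}$.

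Then I would show $\Line{az}\cap Z=\{z\}$. For $z'\in Z\setminus\{z\}$ one has $d(z,z')=2$ (both lie in the stable set $Z$ and share any vertex of $Y$ as a common neighbour) and $d(a,z')=2$ by the computation above, so the associated triple is $(2,2,2)$, which is not tight; hence no $z'\neq z$ lies on $\Line{az}$. Finally $a\in X_Y$ is a generator of $\Line{az}$, so the line meets $X_Y$. Since the lines $\Line{az}$, for $z\in Z$, meet $Z$ in the pairwise distinct singletons $\{z\}$, they are $|Z|$ distinct lines in $F_{X_YZ}$, which proves the Fact. (Incidentally the same computation shows that \emph{every} line of $F_{X_YZ}$ has these properties, but only the $|Z|$ lines through the fixed $a$ are needed here.)

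I do not anticipate a real obstacle: the statement is a pure distance count in the bisplit structure. The only point needing a little care is that the argument must stay correct when $|Z|=1$, in which case the claim $\Line{az}\cap Z=\{z\}$ is vacuously true and a single line already meets the $|Z|$ quota.
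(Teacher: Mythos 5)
Your proposal is correct and follows exactly the paper's approach: pin one vertex of $X_Y$ and study the lines it generates with each $z\in Z$, distinguishing them by their singleton intersections with $Z$. You simply write out the distance computations (the tight triples $(1,1,2)$ and $(1,2,3)$, and the non-tight $(2,2,2)$) that the paper dismisses as a straightforward reading of Figure~\ref{fig:general}.
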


\begin{proof}
  Fact~\ref{fact:XZ} is obtained by pinning one vertex $x$ in $X_Y$ and
studying lines $\Line{xz}$ for every vertex $z$ in $Z$. The
intersection of these lines with $X_Y$ guarantees that they are
distinct from lines in $F_{X'X'}$. By \eqref{eq:Y3} and Fact~\ref{fact:Y},
they are also distinct from lines in $F_{Y_1Y}$.
\end{proof}

\begin{fact}
  \label{fact:YZ}
  If $G$ has no universal line, and $X_Y$ is not empty, then $F_{YZ}$
  has cardinality at least 2. Moreover, every line in $F_{YZ}$
  contains all $X_Y, Y$ and $Z$.
\end{fact}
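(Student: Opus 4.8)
The plan is to prove the two assertions separately. The ``moreover'' part uses neither hypothesis and is a direct reading of Figure~\ref{fig:general}. Fix $y\in Y$ and $z\in Z$, so that $d(y,z)=1$ since $Y$ and $Z$ induce a complete bipartite graph; then a vertex $w\notin\{y,z\}$ lies on $\Line{yz}$ precisely when $|d(w,y)-d(w,z)|=1$. Every $w\in Y\setminus\{y\}$ has $(d(w,y),d(w,z))=(2,1)$, every $w\in Z\setminus\{z\}$ has $(1,2)$, and every $w\in X_Y$ has $d(w,z)=2$ while $d(w,y)\in\{1,3\}$ according to whether $wy$ is an edge; in each case the two distances differ by $1$, so $X_Y\cup Y\cup Z\subseteq\Line{yz}$. (The same computation even puts $X_Z$ inside $\Line{yz}$, but we shall not need this.)

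For the bound $|F_{YZ}|\ge 2$ I would argue by contradiction, assuming that all lines of $F_{YZ}$ coincide with a single line $L$. The heart of the argument is to deduce that then \emph{every} vertex $v$ of $X'$ is adjacent to all of $Y\cup Z$. Indeed, $v$ has a neighbour $y_0\in Y$ and a neighbour $z_0\in Z$, so $d(v,y_0)=d(v,z_0)=1$, and by the observation above $v\notin\Line{y_0z_0}=L$. For arbitrary $y\in Y$ and $z\in Z$ we have $\Line{yz}=L$, hence $v\notin\Line{yz}$, and since $d(v,y),d(v,z)\in\{1,2\}$ (each is reached through a neighbour of $v$ in $Z$, resp.\ in $Y$) this forces $d(v,y)=d(v,z)$. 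Taking $z=z_0$ gives $d(v,y)=1$ for all $y\in Y$, and taking $y=y_0$ gives $d(v,z)=1$ for all $z\in Z$, which is the claim.

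Now the hypothesis $X_Y\neq\emptyset$ lets me exhibit a universal line, contradicting the remaining hypothesis. Choose $x\in X_Y$ and a neighbour $y$ of $x$; by definition of $X_Y$ we have $y\in Y$, and I claim $\Line{xy}=V(G)$. Since $d(x,y)=1$, a vertex $w$ lies on $\Line{xy}$ exactly when the three distances $d(w,x)$, $d(w,y)$, $1$ form, up to order, one of the tight triples $(1,1,2)$, $(1,2,3)$, $(1,3,4)$. For $w\in Y\setminus\{y\}$ the pair $(d(w,x),d(w,y))$ is $(1,2)$ or $(3,2)$; for $w\in Z$ it is $(2,1)$; for $w\in X_Z$ it is $(3,2)$; for $w\in X_Y\setminus\{x\}$ note that $d(w,x)=4$ forces $w$ non-adjacent to $y$ (otherwise $y$ would be a common neighbour of $w$ and $x$), hence $d(w,y)=3$, so the pair is $(2,1)$, $(2,3)$ or $(4,3)$; and for $w\in X'$ the previous paragraph gives $w\sim y$, so $d(w,y)=1$, and since $y\sim x$ also $d(w,x)=2$. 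In every case appending the distance $1$ produces a tight triple, so $\Line{xy}=V(G)$, a contradiction. Hence $|F_{YZ}|\ge 2$.

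The only delicate step is the middle one, where the coincidence of all the lines $\Line{yz}$ is turned into the statement that every vertex of $X'$ is complete to $Y\cup Z$; everything else is routine bookkeeping on Figure~\ref{fig:general}. That step is really needed for the last paragraph: without control on the distances from the $X'$-vertices to $x$ and $y$, the line $\Line{xy}$ need not be universal, since an $X'$-vertex sharing a $Y$-neighbour with $x$ but not adjacent to $y$ would lie at distances $(2,2)$ from $(x,y)$ and would escape it.
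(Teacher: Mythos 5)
Your proof is correct and follows essentially the same route as the paper: assume all lines of $F_{YZ}$ coincide, deduce that $X'$ is complete to $Y$ (the paper gets this by noting the symmetric differences $N_{X'}(y)\triangle N_{X'}(z)$ must all agree, you by a pointwise argument on each $v\in X'$), and then exhibit the universal line $\Line{xy}$ for $x\in X_Y$ and a neighbour $y$. The only difference is cosmetic, and your explicit case-check that $\Line{xy}$ is universal is if anything more detailed than the paper's.
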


\begin{proof}
  Let $z$ be a vertex in $Z$ (it is not empty otherwise $G$ is
  bipartite). For any vertex $y$ in $Y$, the line $\Line{yz}$ includes
  all vertices in $X_Y, Y$ and $Z$. Moreover, in $X'$ it includes all
  vertices that are at distinct distances from $y$ and $z$. These are
  exactly the vertices in the symmetric difference of $N(y)$ and
  $N(z)$ in $X'$. Suppose that $F_{YZ}$ has cardinality 1, then the
  symmetric difference between $N(y)$ and $N(z)$ is the same for all
  $y$ in $Y$. This implies that all vertices of $Y$ have the same
  neighbourhood in $X'$. In other words, every vertex of $X'$ is
  adjacent to all vertices of $Y$ or to none of them. But our
  definition of $X'$ states that all its vertices have at least one
  neighbour in $Y$. We may conclude that $Y$ is complete to $X'$.  Now
  consider a vertex $x$ in $X_Y$. It has at least one neighbour $y$ in
  $Y_1$ and the line $\Line{xy}$ is universal which is a
  contradiction.
\end{proof}


\subsubsection{When both $X_Y$ and $X_Z$ are non-empty.}

Let us now suppose that $X_Y$ and $X_Z$ are non-empty. We prove that
we find many lines.

\begin{proposition}
  Given a bisplit graph $G$ such that $X_Y$ and $X_Z$ are non-empty,
  the metric space induced by $G$ satisfies the \dbelong.
  \label{prop:nonempty}
\end{proposition}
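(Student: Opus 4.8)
The plan is to assume $G$ has no universal line and count distinct lines through the picture in Figure~\ref{fig:general}, exploiting that both $X_Y$ and $X_Z$ are nonempty and hence, by the symmetric versions of \eqref{eq:Y1} and \eqref{eq:Y3}, that both $Y$ and $Z$ have cardinality at least $3$ and both $Y_1$ and $Z_1$ have cardinality at least $2$. The building blocks are already at hand: Fact~\ref{fact:X'} gives ${|X'| \choose 2}$ lines meeting $X'$ in exactly two points and avoiding $X_Y \cup X_Z$; Fact~\ref{fact:Y} and its $Z$-analogue give ${|Y_1| \choose 2} + |Y_1||Y_2|$ lines meeting $Y$ in two points and $X_Y$ but not $X_Z$, and symmetrically ${|Z_1| \choose 2} + |Z_1||Z_2|$ lines meeting $X_Z$ but not $X_Y$; Fact~\ref{fact:XZ} and its mirror give at least $|Z|$ lines (from $F_{X_YZ}$) containing all of $Y$ and meeting $X_Y$, and at least $|Y|$ lines (from $F_{X_ZY}$) containing all of $Z$ and meeting $X_Z$; and Fact~\ref{fact:YZ} gives at least $2$ lines in $F_{YZ}$ each containing all three of $X_Y$, $Y$, $Z$.

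The first real step is to argue these five families are pairwise disjoint. Most separations come for free from the intersection pattern with $X_Y$ and $X_Z$: $F_{X'X'}$ meets neither, $F_{Y_1Y}$ meets $X_Y$ only, its $Z$-analogue meets $X_Z$ only, and $F_{YZ}$ meets both, so the only genuine collisions to rule out are between $F_{X_YZ}$ and the analogous $F_{X_ZY}$ (resolved by looking at which of $Y$, $Z$ is entirely contained — a line in $F_{X_YZ}$ meets $Z$ in a singleton while a line in $F_{X_ZY}$ contains all of $Z$, using $|Z|\ge 2$), between $F_{X_YZ}$ and $F_{YZ}$ (again a singleton-vs-full intersection with $Z$), and between $F_{X_YZ}$ and $F_{Y_1Y}$ (already noted in the proof of Fact~\ref{fact:XZ}, using \eqref{eq:Y3}), plus the mirrored versions. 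So in total we collect at least
\begin{equation*}
  \binom{|X'|}{2} + \binom{|Y_1|}{2} + |Y_1||Y_2| + \binom{|Z_1|}{2} + |Z_1||Z_2| + |Y| + |Z| + 2
\end{equation*}
distinct lines, and we want this to be at least $n = |X'| + |X_Y| + |X_Z| + |Y| + |Z|$.

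The second step reduces the target to showing $\binom{|X'|}{2} + \binom{|Y_1|}{2} + |Y_1||Y_2| + \binom{|Z_1|}{2} + |Z_1||Z_2| + 2 \ge |X'| + |X_Y| + |X_Z|$. Using the first Lemma of Section~\ref{sec:calculus}, $\binom{|X'|}{2} \ge |X'| - 1$, so it suffices that $\binom{|Y_1|}{2} + |Y_1||Y_2| + \binom{|Z_1|}{2} + |Z_1||Z_2| + 1 \ge |X_Y| + |X_Z|$. Here the main obstacle — and where the paper's Lemma~\ref{lem:chiant} must enter — is bounding $|X_Y|$ in terms of the structure of $Y$: every vertex of $X_Y$ has all its neighbours in $Y$, has degree at least $2$, and is \emph{not} complete to $Y$ by \eqref{eq:nofull}, and moreover distinct vertices of $X_Y$ that are twins would... actually the key bound is that two vertices of $X_Y$ with the same neighbourhood in $Y$ would generate lines we've already... no — the honest statement is that $|X_Y|$ is at most the number of subsets of $Y$ of size between $2$ and $|Y|-1$ that actually occur, but to get a usable inequality one bounds $|X_Y| \le \binom{|Y_1|}{\lceil \cdots \rceil}$-type quantity, and then Lemma~\ref{lem:chiant} says precisely that ${|Y_1| \choose 2} + {\lceil 2|X_Y|/|Y_1|\rceil \choose 2} \ge |X_Y| + |Y_1| - 1$ except in the tiny cases $|Y_1| \in \{2,3\}$ with $|X_Y|$ small. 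So the plan is: first handle these finitely many exceptional configurations by hand (they force $Y$ small and $X_Y$ small, and one exhibits the missing handful of lines directly, much as in the endgame of Proposition~\ref{prop:non-deg}), and otherwise combine the Lemma's inequality for both the $Y$-side and the $Z$-side with the count above to conclude. The hard part will be pinning down the correct inequality relating $|X_Y|$ to $|Y_1|$ — i.e.\ why $\lceil 2|X_Y|/|Y_1| \rceil$ is the right quantity to feed into Lemma~\ref{lem:chiant} — which presumably comes from a double-counting of edges between $X_Y$ and $Y_1$ together with the observation that no two vertices of $X_Y$ can share their entire neighbourhood without creating an already-counted or universal line, forcing enough spread in the neighbourhoods; everything else is the bookkeeping of disjointness and the small-case analysis.
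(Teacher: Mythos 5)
Your count does not close, and the inequality you reduce the problem to is false. After using $\binom{|X'|}{2}\ge|X'|-1$ you need
\[
\binom{|Y_1|}{2}+|Y_1||Y_2|+\binom{|Z_1|}{2}+|Z_1||Z_2|+1\;\ge\;|X_Y|+|X_Z|,
\]
but the left-hand side is bounded by a function of $|Y|$ and $|Z|$ alone, while $|X_Y|$ and $|X_Z|$ are not: nothing prevents arbitrarily many vertices of $X_Y$ from all being attached to the same two vertices of $Y_1$ (this violates neither \eqref{eq:nofull} nor connectivity once $|Y|\ge 3$, and twins in $X_Y$ create no universal line). So with $|Y_1|=|Z_1|=2$ and $|X_Y|,|X_Z|$ large, your five families fall short by an unbounded amount. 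The patch you gesture at -- Lemma~\ref{lem:chiant} with the quantity $\lceil 2|X_Y|/|Y_1|\rceil$ -- belongs to a different case of the paper (where $X_Z$ and $Y_2$ are empty); to invoke it here you would first have to introduce a sixth family of $\binom{\lceil 2|X_Y|/|Y_1|\rceil}{2}$ lines generated by pairs of mutually-distance-2 vertices inside $X_Y$ (via pigeonhole on a maximum-degree vertex of $Y_1$), prove its disjointness from the others, do the same on the $Z$ side, and dispose of the lemma's exceptional cases -- none of which you do, and all of which is unnecessary in this case.

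The idea you are missing, and the one the paper uses, is the family $F_{X_YX_Z}$ of lines generated by one vertex $a\in X_Y$ and one vertex $b\in X_Z$. Any such pair is at distance $3$, the line $\Line{ab}$ meets $X_Y$ exactly in $\{a\}$ and $X_Z$ exactly in $\{b\}$ (and contains at least two vertices of $Y_1$ and of $Z$), so these $|X_Y||X_Z|\ge|X_Y|+|X_Z|-1$ lines are pairwise distinct and distinct from all the others by their intersection pattern with $X_Y$ and $X_Z$. Replacing your two $F_{YZ}$ lines by this family, the sum $(|X'|-1)+1+1+|Y|+|Z|+(|X_Y|+|X_Z|-1)$ reaches $n$ exactly, with no appeal to Lemma~\ref{lem:chiant} and no exceptional cases. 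Your disjointness analysis of the remaining five families is essentially the paper's and is fine.
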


\begin{proof}
  In addition to the three families of lines described in
  facts~\ref{fact:X'},~\ref{fact:Y} and~\ref{fact:XZ}, we shall find
  three more families, namely $F_{YX_Z}$, $F_{Z_1Z}$ and
  $F_{X_YX_Z}$. First notice that since $X_Z$ is non-empty, in the
  same manner as \eqref{eq:XY} leads to \eqref{eq:Y1} and
  \eqref{eq:Y3}, we may assume that 
  \begin{equation}
    Z_1 \text{ has cardinality at least } 2 \text{ and } Z \text{ has cardinality at least } 3.
    \label{eq:Z}
  \end{equation}
  Moreover, using the same arguments as for facts~\ref{fact:Y}
  and~\ref{fact:XZ}, the set $F_{Z_1Z}$ defines ${|Z_1| \choose 2} +
  |Z_1||Z_2|$ lines which are distinct from lines in $F_{X'X'}$ and
  the set $F_{X_ZY}$ contains at least $|Y|$ lines which are distinct
  from lines in $F_{X'X'}$ or in $F_{Z_1Z}$. We may observe
  additionally that any line in these new families has a non-empty
  intersection with $X_Z$. Then they are all distinct from lines in
  $F_{Y_1Y}$. They cannot be equal to a line in $F_{X_YZ}$ since they
  either intersect $Y$ on a singleton, or $Z$ on exactly two vertices.

  In the end, we also add the lines in $F_{X_YX_Z}$. Such lines
  intersect both $X_Y$ and $X_Z$ on singletons (the generators) and
  have at least two elements in $Y_1$ and in $Z$. They are all
  distinct so they define $|X_Y||X_Z|$ lines. All of them are distinct
  from the lines described above. Table~\ref{tab:nonempty} gives a quick
  overview of the considered families and the reason why they are
  distinguished from one another.


  \begin{table}[ht]
  \begin{center}
    \scriptsize
    \begin{tabular}{|c|c||cccccc||c|}
      \hline 
      
      & & \multicolumn{6}{c||}{\rule{0pt}{.3cm} Intersection of $\Line{ab}$ with } &
      \\ 

      Family & Generators & $X'$ & $X_Y$ & \multicolumn{2}{c}{$Y_1 \cup Y_2$} & $Z_1 \cup Z_2$ & $X_Z$
      & Number of lines\\

      \hline 

      \rule[-.25cm]{0pt}{.5cm} $F_{X'X'}$ & $a \in X', b\in X'$ & $\{a,b\}$ & $\emptyset$ & & &
      & $\emptyset$ & $\displaystyle{|X'| \choose 2}$ \\

      \hline

       \rule[-.25cm]{0pt}{.5cm}  $F_{Y_1Y}$ & $a \in Y_1, b\in Y$ & & $\geq 1$ & \multicolumn{2}{c}{$\{a,b\}$} & $Z_1 \cup Z_2$ &
      $\emptyset$ & $\displaystyle{|Y_1| \choose 2} + |Y_1||Y_2|$ \\

       \hline

       \rule[-.2cm]{0pt}{.5cm}  $F_{X_YZ}$ & $a \in X_Y, b \in Z$ & & $\geq 1$ & \multicolumn{2}{c}{$Y_1 \cup Y_2$} & $\{b\}$ &
       & $ \geq |Z|$ \\

      \hline

      \rule[-.25cm]{0pt}{.5cm}  $F_{Z_1Z}$  & $a \in Z_1, b\in Z$ & & $\emptyset$ & \multicolumn{2}{c}{$Y_1 \cup Y_2$} & $\{a,b\}$ &
      $\geq 1$ & $\displaystyle{|Z_1| \choose 2} + |Z_1||Z_2|$ \\

      \hline

       \rule[-.2cm]{0pt}{.5cm}  $F_{X_ZY}$ & $a \in X_Z, b \in Y$ & & & \multicolumn{2}{c}{$\{b\}$} & $Z_1 \cup Z_2 $ &
       $\geq 1$ & $ \geq |Y|$ \\
      
      \hline
      
       \rule[-.2cm]{0pt}{.5cm}   $F_{X_YX_Z}$ & $a \in X_Y, b \in X_Z$ & & $\{a\}$ & \multicolumn{2}{c}{$\geq 2$} & $\geq 2$ &
       $\{b\}$ & $ \geq |X_Y||X_Z|$ \\
      
      \hline

    \end{tabular}
  \end{center}
  \caption{Families of lines and their intersections}
  \label{tab:nonempty}
  \end{table}


  Now we can sum all those lines. Since $Y_1$ and $Z_1$ have
  cardinality at least 2, families $F_{Y_1Y}$ and $F_{Z_1Z}$ each
  provides at least one line. Moreover, $|X'| \choose 2$ is lower
  bounded by $|X'|-1$ and $|X_Y||X_Z|$ is lower bounded by $|X_Y| +
  |X_Z| -1$. In the end we have at least
  \begin{equation*}
    |X'| + |X_Y| + |X_Z| + |Y| + |Z| \text{ lines,}
  \end{equation*}
  which is the order of the graph. This concludes the proof of
  Proposition~\ref{prop:nonempty}.
\end{proof}


\subsubsection{When $X_Y$ is non-empty and $X_Z$ is empty.}

We now consider the case when $X_Z$ is empty. In this situation, the
set $Z_1$ is also empty. Thus, the last three families of lines in
Table~\ref{tab:nonempty} cannot be used anymore. We introduce a new
family $F_{X_YY_2}$ and through a straightforward analysis of
distances (see Figure~\ref{fig:general}) we obtain
Table~\ref{tab:xzempty}. 


\begin{table}[ht]
  \begin{center}
    \scriptsize
    \begin{tabular}{|c|c||ccccc||c|}
      \hline 
      
      & & \multicolumn{5}{c||}{\rule{0pt}{.3cm} Intersection of $\Line{ab}$ with } &
      \\ 

      Family & Generators & $X'$ & $X_Y$ & $Y_1$ &  $Y_2$ & $Z$ & Number of lines\\

      \hline 

      \rule[-.25cm]{0pt}{.5cm} $F_{X'X'}$ & $a \in X', b\in X'$ & $\{a,b\}$ & $\emptyset$ & & &
        & $\displaystyle{|X'| \choose 2}$ \\

      \hline

       \rule[-.25cm]{0pt}{.5cm}  $F_{Y_1Y}$ & $a \in Y_1, b\in Y$ & & $\geq 1$ & \multicolumn{2}{c}{$\{a,b\}$} & $Z$ &
       $\displaystyle{|Y_1| \choose 2} + |Y_1||Y_2|$ \\

       \hline

       \rule[-.2cm]{0pt}{.5cm}  $F_{X_YZ}$ & $a \in X_Y, b \in Z$ & & $\geq 1$ & \multicolumn{2}{c}{$Y_1 \cup Y_2$} & $\{b\}$ &
       $ \geq |Z|$ \\

      \hline

      \rule[-.25cm]{0pt}{.5cm}  $F_{X_YY_2}$  & $a \in X_Y, b\in Y_2$ & & $\{a\}$ & $\geq 2$ & $\{b\}$ & $Z$&
      $|X_Y||Y_2|$ \\

      \hline     
      
    \end{tabular}
  \end{center}
  \caption{Families of lines  when $X_Z$ is empty}
  \label{tab:xzempty}
\end{table}


The first three families are distinguished by
facts~\ref{fact:X'},~\ref{fact:Y} and~\ref{fact:XZ}. Lines in
$F_{X_YY_2}$ are also different from lines in $F_{X'X'}$ and
$F_{Y_1Y}$. Moreover, such a line could be equal to a line in
$F_{X_YZ}$ in a very specific case only. Indeed, suppose that
$a,b,c,d$ are vertices in $X_Y, Y_2, X_Y$ and $Z$ respectively, such
that $\Line{ab} = \Line{cd}$. First note that $a$ must equal $c$
(consider intersection with $X_Y$) so $\Line{ab} =
\Line{ad}$. Moreover, in $Y_1$, the line $\Line{ab}$ contains only the
neighbours of $a$. Since line $\Line{ad}$ contains the whole set $Y,$
vertex $a$ must be complete to $Y_1$; and $Y_2$ must be a
singleton. Thus, $Y_2$ is exactly $\{b\}$. Observing the intersections
of these lines with $Z$, we need $Z$ to be the singleton $\{d\}$. All
vertices of $X'$ have a neighbour in $Z$ so $d$ is complete to
$X'$. As a consequence $\Line{ad}$ contains all vertices of $X'$ that
are at distance 3 from $a$. But since $\Line{ab}$ intersects $X'$ only
on vertices at distance 2 from $a$, this implies that no element of
$X'$ is at distance $3$ from $a$. Since $b$ has degree at least $2$
(or the graph has a bridge and thus a universal line), there must be a
vertex $x$ in $X'$ that is a neighbour of $b$. In return, vertex $x$
is in $\Line{ab}$ but it cannot be in $\Line{ac}$ since it is at
distance 2 from $a$. This yields a contradiction. Thus all four
families of lines are disjoint.


\paragraph{If $Y_2$ has two or more elements.}

Then we may sum all lines described in Table~\ref{tab:xzempty}. We get
at least
\begin{equation*}
  {|X'| \choose 2} + {|Y_1| \choose 2} + |Y_1||Y_2| + |Z| + |X_Y||Y_2|
  \text{ lines.}
\end{equation*}
Note that $x \choose 2$ is always bounded below by $x -1$ and $xy$ is
bounded below by $x+y-1$ when both $x$ and $y$ are positive integers
and by $x+y$ if both are at least 2 (recall that $|Y_1|$ is at least
2). By applying these common properties (stated in
Section~\ref{sec:calculus}), we bound the number of lines by
\begin{equation*}
  |X'| + |Y_1| + |Y_2| + |X_Y| + |Z| + (|Y_1| + |Y_2| - 3).
\end{equation*}
And by \eqref{eq:Y3}, we may conclude that graph $G$ has sufficiently
many lines.


\paragraph{If $Y_2$ is a singleton.}

In that case, our four families bring at least 
\begin{equation*}
  |X'| + |Y_1| + |Y_2| + |X_Y| + |Z| + (|Y_1| + |Y_2| - 4) \text{ lines.}
\end{equation*}
By \eqref{eq:Y3}, we only miss one line to reach our goal. 

$\bullet$ If $Z$ is a singleton $\{z\}$, then every vertex in $X'$ is adjacent
to $z$. Thus, they are all at distance 2 from one another and all the
lines in $F_{X'X'}$ contain $z$. Then $z$ is an element in all our
lines but there must be a line that does not go trough $z$ (or there
is a universal line). Then $G$ satisfies the \dbelong.

$\bullet$ If $Z$ and $X_Y$ both have size at least 2, then no line of our
families contains $X_Y, Y$ and $Z$. But we may easily consider the
line generated by the end vertices of any edge between $Y$ and $Z$ and
see that it contains all $X_Y, Y$ and $Z$. It is then different from
all considered lines and we have sufficiently many lines.

$\bullet$ If $Z$ has size at least 2 and $X_Y$ is a singleton, we consider only
the three families of lines $F_{X'X'}$, $F_{Y_1Y}$ and $F_{X_YZ}$. By
usual inequalities they provide at least
\begin{equation*}
  |X'| + |Y_1| + |Z| + (|Y_1| -2) \text{ lines.}
\end{equation*}
Since $Y_1$ has cardinality at least 2, we only need two more lines to
reach the order of $G$. Note that in our three families, no line
contains all $X_Y$, $Y$ and $Z$. By Fact~\ref{fact:YZ} we may add two
lines from $F_{YZ}$.


\paragraph{If $Y_2$ is empty.}

In this last case, we shall exhibit one last family of lines. For
this, let $y_0$ be a vertex in $Y$ with maximum degree in $X_Y$. Since
every vertex in $X_Y$ has degree at least 2 and all neighbours must be
in $Y$, we know that $y_0$ has at least $\lceil \frac{2|X_Y|}{|Y|}
\rceil$ neighbours in $X_Y$.

Now let $X_0$ denote a largest set of vertices in $X_Y$ which are at
distance 2 from each other. The set $X_0$ has size at least $\lceil
\frac{2|X_Y|}{|Y|} \rceil$. Now, lines in $F_{X_0X_0}$ intersect $X_0$
in exactly two vertices (the generators). Moreover, they do not
intersect $Z$. The families of lines considered are shown in
Table~\ref{tab:y2empty}. Those lines are all distinct except for the
two last rows if $Z$ is a singleton.

\begin{table}[ht]
  \begin{center}
    \scriptsize
    \begin{tabular}{|c|c||ccccc||c|}
      \hline 
      
      & & \multicolumn{5}{c||}{\rule{0pt}{.3cm} Intersection of $\Line{ab}$ with } &
      \\ 

      Family & Generators & $X'$ & $X_0$ & $X_Y$ &  $Y$ & $Z$ & Number of lines\\

      \hline 

      \rule[-.25cm]{0pt}{.5cm} $F_{X'X'}$ & $a \in X', b\in X'$ &
      $\{a,b\}$ & $\emptyset$ & $\emptyset$ & & & $\displaystyle{|X'|
        \choose 2}$ \\

      \hline

       \rule[-.25cm]{0pt}{.5cm}  $F_{YY}$ & $a \in Y, b\in Y$ & & & $\geq 1$ & $\{a,b\}$ & $Z$ &
       $\displaystyle{|Y| \choose 2}$ \\

       \hline

      \rule[-.25cm]{0pt}{.5cm}  $F_{X_0Y_0}$  & $a \in X_0, b\in Y_0$ & & $\{a,b\}$ &  & $\geq 1$ & $\emptyset$&
      $\displaystyle{|X_0| \choose 2}$ \\

      \hline     
      
      \rule[-.2cm]{0pt}{.5cm}  $F_{YZ}$  & $a \in Y, b\in Z$ & & & $X_Y$ & $Y$ & $Z$ & $\geq 2$ \\
      
      \hline

      \rule[-.2cm]{0pt}{.5cm}  $F_{X_YZ}$ & $a \in X_Y, b \in Z$ & & & $\geq 1$ & $Y$ & $\{b\}$ &
      $ \geq |Z|$ \\
      
      \hline
    \end{tabular}
  \end{center}
  \caption{Families of lines  when $X_Z$ and $Y_2$ are empty}  
  \label{tab:y2empty}
\end{table}

Note that since $|Y|$ is at least 3 and $|X_0| \geq  \lceil
\frac{2|X_Y|}{|Y|} \rceil$, Lemma~\ref{lem:chiant} tells us that
whenever $|X_Y|$ or $|Y|$ is not 3, we have:
\begin{equation}
  \label{eq:smart}
  {|X_0| \choose 2} + {|Y| \choose 2} \geq |X_Y| + |Y| - 1.
\end{equation}
Moreover, if $X_Y$ and $Y$ both have size exactly 3, all vertices of
$X_Y$ are at distance 2 from each other. Indeed, either a vertex of $Y$
has degree 3 in $X_Y$, or all of them must have degree 2 (they must be
incident to at least six edges coming from $X_Y$) and as a consequence
all vertices of $X_Y$ have degree 2 ($X_Y$ and $Y$ induce a cycle of
length 6). In both cases, we deduce that all vertices of $X_Y$ are at
distance $2$ from each other. Thus, we may always consider that $|X_0|$
has value $3$. Therefore~\eqref{eq:smart} remains true.

\subparagraph{When $Z$ is not a singleton.} 

If $Z$ is not a singleton, we may sum all rows of
Table~\ref{tab:y2empty}. By~\eqref{eq:smart} we have a lower bound of
\begin{equation*}
  |X'|+|Z|+|X_Y|+|Y| \text{ lines.}
\end{equation*}
Thus $G$ satisfies the \dbelong.

\subparagraph{When $Z$ is a singleton.}

If $Z$ has size 1, we do not count the last row of
Table~\ref{tab:y2empty}. Then we miss only one line. To find it, just
observe that there must be a vertex $x'$ in $X'$ by
$\eqref{eq:X'}$. This vertex has a neighbour $y$ in $Y$ which is equal
to $Y_1$ since $Y_2$ is empty. This guarantees that there is a vertex
$x$ in $X_Y$ at distance 2 from $x'$. Analyzing the distances in $G$,
we derive that line $\Line{xx'}$ intersects $X'$ exactly on $x'$,
$X_0$ on at most one vertex, and does not intersect $Z$. Thus, this
line is different from the first four rows in
Table~\ref{tab:y2empty}. This completes the number of lines to reach
the order of $G$.


\acknowledgements
\label{sec:ack}
This research was initiated at Recolles under the patronage of
AlCoLoCo and Universit\'e Clermont Auvergne. Authors are grateful to
Aline Parreau for unfruitful but nonetheless lively and interesting
discussions on the topic.

Moreover, our research is proudly supported by French ANR through
projects \textsc{Distancia} (ANR-17-CE40-0015) and \textsc{GraphEn}
(ANR-15-CE40-0009).

Giacomo Kahn is supported by the European Union's {\em Fonds
  Europ\'een de D\'eveloppement R\'egional (\textsc{feder})} program
though project AAP ressourcement S3-DIS4 (2015-2018).


\nocite{*}
\bibliographystyle{abbrvnat}
\bibliography{sample-dmtcs}
\label{sec:biblio}

\end{document}